\newtheoremstyle{newplain}
{4pt}
{4pt}
{\itshape}
{}
{\itshape\bf}
{.}
{.5em}
{}
\theoremstyle{newplain}
\newtheorem{theorem}{Theorem}[section]
\newtheorem{proposition}{Proposition}[section]
\newtheorem{lemma}{Lemma}[section]
\newtheorem{assumption}{Assumption}[section]
\newtheoremstyle{newdefinition}
{4pt}
{4pt}
{}
{}
{\itshape\bf}
{.}
{.5em}
{}
\theoremstyle{newdefinition}
\newtheorem{vg}{Example}[section]
\newtheorem{remark}{Remark}[section]
\numberwithin{equation}{section}
\newcommand{\be}{\begin{equation}}
\newcommand{\ee}{\end{equation}}
\newcommand{\nee}{\nonumber\end{equation}}
\newcommand{\eel}[1]{\label{#1}\end{equation}}
\newcommand{\brmk}[1]{\begin{remark}\label{#1}\begin{em} }
\newcommand{\ermk}{ $\quad\triangleleft$\end{em}\end{remark}}
\newcommand{\bvg}[1]{\begin{vg}\label{#1}\begin{em} }
\newcommand{\evg}{ $\quad\triangleleft$\end{em}\end{vg}}
\begin{document}
\bibliographystyle{plainnat}
\setlength{\abovedisplayskip}{8pt}
\setlength{\belowdisplayskip}{8pt}
\setlength{\abovedisplayshortskip}{4pt}
\setlength{\belowdisplayshortskip}{8pt}

\begin{titlepage}

\title{{\bf{\Huge The Long Bond, Long Forward Measure and Long-Term Factorization in Heath-Jarrow-Morton Models}}}
\author{Likuan Qin\thanks{likuanqin2012@u.northwestern.edu}
}
\author{Vadim Linetsky\thanks{linetsky@iems.northwestern.edu}
}

\affil{\emph{Department of Industrial Engineering and Management Sciences}\\
\emph{McCormick School of Engineering and Applied Sciences}\\
\emph{Northwestern University}}
\date{}
\end{titlepage}
\maketitle

\begin{abstract}
This paper proves existence of the long bond, long forward measure and long-term factorization of the stochastic discount factor (SDF) of \citet{alvarez_2005using} and \citet{hansen_2009} in Heath-Jarrow-Morton (HJM) models in the function space framework of \citet{filipovic_2001consistency}. A sufficient condition on the weight in the Hilbert space of forward rate volatility curves is given that ensures existence of the long bond volatility process, the long bond process and the long-term factorization of the SDF into discounting at the rate of return on the long bond and a martingale component defining the long forward measure, the long-term limit of $T$-forward measures.

\end{abstract}

\section{Introduction}

The stochastic discount factor (SDF) assigns today's prices to risky future payoffs at alternative investment horizons. It accomplishes this by simultaneously discounting the future and adjusting for risk.
A familiar representation of the SDF is a factorization into the factor discounting at the short-term risk-free interest rate and a martingale component adjusting for risk. This martingale accomplishes the change of probabilities from the data-generating (physical) measure ${\mathbb P}$ to the risk-neutral measure ${\mathbb Q}$.
More recently, \citet{alvarez_2005using}, \citet{hansen_2008consumption}, \citet{hansen_2009}, \citet{hansen_2012} and  \citet{linetsky_2014long} study an alternative {\em long-term factorization} of the SDF. The long-term factorization decomposes the pricing kernel (PK) process in an arbitrage-free asset pricing model
$$
S_t=e^{-\lambda t}\frac{1}{\pi_t}M_t^\infty
$$
into discounting at the long-term discount rate
$\lambda$ (yield on the {\em long bond}, a zero-coupon bond of asymptotically long maturity), a process $\pi_t$ characterizing gross holding period returns on the long bond net of the long-term discount rate, and a positive martingale $M_t^\infty$ that defines a {\em long-term forward measure} ${\mathbb L}$. The process $B_t^\infty=e^{\lambda t}\pi_t$ tracks the gross return earned on the long bond from time zero to time $t$. Then the SDF from time $t+\tau$ to time $t$ takes the form
$$
\frac{S_{t+\tau}}{S_t}=\frac{1}{R^\infty_{t,t+\tau}}\frac{M^\infty_{t+\tau}}{M^\infty_t},
$$
where
$$
\frac{1}{R^\infty_{t,t+\tau}}=\frac{B_{t}^\infty}{B_{t+\tau}^\infty}=e^{-\lambda \tau}\frac{\pi_{t}}{\pi_{t+\tau}}
$$
is the discount factor discounting at the rate of return earned on holding the long bond between times $t$ and $t+\tau$, and the factor $M^\infty_{t+\tau}/M^\infty_t$ encodes the risk adjustment. \citet{alvarez_2005using} originally introduced the long-term factorization in discrete-time ergodic economies. \citet{hansen_2009} introduced and studied the long-term factorization in continuous-time Markovian economies and expressed it in terms of the Perron-Frobenius principal eigenfunction of the pricing operator. Recently  \citet{linetsky_2014long} extended the long-term factorization to general semimartingale economies.
Their martingale approach to the characterization of long-term pricing does not require a Markov specification and is based on a limiting procedure, constructing the long forward measure ${\mathbb L}$ defined by the martingale $M_t^\infty$ as the limit of $T$-maturity forward measures ${\mathbb Q}^T$ familiar in mathematical finance (\citet{jarrow_1987pricing}, \citet{jamshidian_1989exact}, \citet{geman_1995changes}) as maturity increases.
The long-term discount rate $\lambda$ and the process  $\pi_t$ are counterparts of the Perron-Frobenius eigenvalue and eigenfunction of  \citet{hansen_2009}  in the sense that in Markovian economies the process $\pi_t$ reduces to the function of the Markovian state, $\pi(X_t)$, where $\pi(x)$ is the Perron-Frobenius eigenfunction of the pricing operator with the eigenvalue $e^{-\lambda t}$ as in \citet{hansen_2009} (see also \citet{linetsky_2014_cont} for further details on Markovian models).

The long-term factorization of the SDF is convenient in applications to the pricing of long-lived assets  and to theoretical and empirical investigations of the term structure of the risk-return trade-off. In addition to the references above, the growing literature on the long-term factorization and its applications includes \citet{hansen_2012pricing},  \citet{hansen_2013}, \citet{borovicka_2014mis}, \citet{borovivcka2011risk}, \citet{borovivcka2016term},
\citet{bakshi_2012}, \citet{bakshia2015recovery}, \citet{christensen2014nonparametric},  \citet{christensen_2013estimating}, \citet{linetsky_2014_cont}, \citet{linetsky2016bond}, \citet{backus2015term}, \citet{filipovic2016linear}, \citet{filipovic2016relation}, \citet{lustig2016nominal}. Empirical investigations in this literature show that the martingale $M_t^\infty$ is highly volatile and economically significant.  \citet{bakshi_2012} provide theoretical and empirical bounds on the volatility of the martingale component.  \citet{christensen2014nonparametric} estimates the long-term factorization in a structural asset pricing model connecting to the macro-economic fundamentals.  \citet{linetsky2016bond} estimate the long-term factorization in a dynamic term structure model (DTSM) and show how the martingale component in the long-term factorization controls the term structure of the risk-return trade-off in the bond market. In particular, they directly estimate volatility of the martingale $M^\infty$ in a parametric DTSM and show how this martingale gives rise to the downward slopping term structure of bond Sharpe ratios. \citet{lustig2016nominal} apply long-term factorization to the study of foreign exchange markets.

While \citet{linetsky_2014long} provide the theoretical framework and an abstract sufficient condition for existence of the long-term factorization in general semimartingale models without the Markovian assumption, so far only Markovian model specifications have been investigated in the literature.
The purpose of this paper is to construct the long-term factorization of Heath-Jarrow-Morton term structure models, thereby illustrating how the long-term factorization plays out in non-Markovian models.
We adopt the point of view of \citet{filipovic_2001consistency}, \citet{carmona2007interest} and Bjork and view forward curves as elements of an appropriately specified function space. In particular, we follow the specification of \citet{filipovic_2001consistency}. In Section \ref{example_HJM}, after a review of the HJM modeling framework in the setting of \citet{filipovic_2001consistency}, we give a sufficient condition on the asymptotic behavior of the forward rate volatility that ensures existence of the long bond process, the long forward measure, and the long-term factorization in HJM models. This sufficient condition is quite natural from the interest rate modeling point of view and yields existence of the volatility process for the long bond.
Our theoretical results are summarized in Theorem 2, which constitutes the main result of this paper. The proof is given in the Appendix.
In Section \ref{gauss_HJM} we illustrate our results on examples of  (generally non-Markovian) Gaussian HJM models, where our assumptions and their implications can be seen in a transparent way.

The explicit construction of the long-term factorization in this paper furnishes an alternative mechanism of how the long-term factorizations arises, relative to the original theory of \citet{hansen_2009}. Their original formulation of the long-term factorization is based on Markov process theory, and sufficient conditions rely on ergodicity assumptions that furnish the principal eigenvalue and eigenfunction of the pricing semigroup germane to the long-term behavior. \citet{linetsky_2014long} give a general sufficient condition for the existence of the long-term limit in general semimartingale models and show how results of Hansen-Scheinkman arise when the information filtration is Markovian and the pricing kernel is a multiplicative functional of the Markov process generating the filtration.  In contrast to these references, the explicit construction in the present paper illustrates the mechanism of how the long-term factorization arises in non-Markovian HJM models by imposing a condition on the asymptotic behavior of the volatility of the forward curve. The condition is fully explicit, and the proof shows that under this condition HJM models verify the abstract sufficient condition of  \citet{linetsky_2014long} in general semimartingale models. Furthermore, our Gaussian example shows how in the special case of a Gaussian model with constant parameters the Hansen-Scheinkman principal eigenfunction construction of the long-term factorization is recovered from the assumption about asymptotic behavior of forward curve volatility. For empirical analysis of the long-term factorization on US treasury data we refer to \citet{linetsky2016bond}.

\section{Long Term Factorization in Heath-Jarrow-Morton Models}
\label{example_HJM}

The classical \citet{hjm_1992} framework assumes that
the family of zero-coupon bond processes  $\{(P_t^T)_{t\in [0,T]},T\geq 0\}$ is sufficiently smooth across the maturity parameter $T$ so that there exists a family of instantaneous forward rate processes $\{(f(t,T))_{t\in [0,T]},T\geq 0\}$ such that
$P_t^T=e^{-\int_t^T f(t,s)ds},$
and for each maturity $T$ the forward rate is assumed to follow an It\^{o} process
on the time interval $[0,T]$ driven by an $n$-dimensional Brownian motion.
An alternative point of view on interpreting HJM models is to treat $f_t$ as a stochastic process taking values in an appropriate function space of well-defined forward curves. To this end, the \citet{musiela_1993stochastic} parameterization $f_t(x):=f(t,t+x)$ of the forward curve is convenient. Here $x$ denotes time remaining to maturity, so that $t+x$ is the maturity date. This point of view also allows the volatility function to depend on the entire forward curve. In this approach we work with the process $(f_t)_{t\geq 0}$ taking values in an appropriate space of functions on ${\mathbb R}_+$ (the state $f_t$ is a function of time to maturity $x$, $f_t(x)$, with $x\in {\mathbb R}_+$). Mathematical foundations of stochastic processes taking values in function spaces can be found in \citet{da_2014stochastic}, and the development of this point of view in interest rate modeling can be found in \citet{bjork1999interest}, \citet{bjork1999minimal} and \citet{bjork2001existence}, and \citet{carmona2007interest} and  \citet{filipovic_2001consistency}. In this paper we follow the treatment of  \citet{filipovic_2001consistency}.

The HJM forward curve dynamics reads:
\be
df_t=(Df_t+\mu_t)dt+\sigma_t \cdot dW^{\mathbb{P}}_t.
\eel{HJM_function}
The infinite-dimensional standard Brownian motion $W^{\mathbb{P}}=\{(W^{\mathbb{P},j}_t)_{t\geq 0},j=1,2,\ldots\}$ is a sequence of independent standard Browian motions adapted to the underlying reference filtration $({\mathscr F}_t)_{t\geq 0}$ on the probability space $(\Omega,{\mathscr F},{\mathbb P})$. Here
$\sigma_t \cdot dW^\mathbb{P}_t=\sum_{j\in \mathbb{N}} \sigma^j_t dW_t^{\mathbb{P},j}$. The finite-dimensional case arises by simply setting $\sigma_t^j\equiv 0$ for all $j>n$ for some $n$.
The forward curve $(f_t)_{t\geq 0}$ is a process taking values in the Hilbert space $H_w$ that we will define shortly.
The drift  $\mu_t=\mu(t,\omega,f_t)$ and volatility $\sigma_t^j=\sigma^j(t,\omega,f_t)$
take values in the same Hilbert space $H_w$ and depend on
$\omega$ and $f_t$ (note the difference in $w$ and $\omega$; the former is the weight function in the definition of the Hilbert space, while the latter is an element of the sample space $\Omega$). To lighten notation we often do not show dependence of coefficients on $\omega$ and $f$ explicitly. The additional term $Df_t$ in the drift in Eq.\eqref{HJM_function} arises from Musiela's parameterization, where the operator $D$ is interpreted as the first derivative with respect to time to maturity, $Df_t(x) = \partial_x f_t(x)$, and is defined more precisely below as an operator in the Hilbert space $H_w$.


Following \citet{filipovic_2001consistency}, we next define the Hilbert space $H_w$ of forward curves and give conditions on the volatility and drift to ensure that the solution of the HJM evolution equation Eq.\eqref{HJM_function} exists in the appropriate sense (so that the forward curve stays in its prescribed function space $H_w$ as it evolves in time) and specifies an arbitrage-free term structure.
Let $w:\mathbb{R}_{+}\rightarrow[1,\infty)$ be a non-decreasing $C^1$ function such that
\be
\int_0^\infty w^{-1/3}(x)dx<\infty.
\eel{cond_w}
We define
\be
H_w:=\{h\in L_{\text{loc}}^1(\mathbb{R}_+)\ |\ \exists h'\in L_{\text{loc}}^1(\mathbb{R}_+) \enskip\text{and}\enskip \|h\|_w<\infty\},
\ee
where $$\|h\|_w^2:=|h(0)|^2+\int_{R_+} |h'(x)|^2 w(x)dx,$$
and $h'(x)$ is the weak derivative. That is, $H_w$ is defined as the space of locally integrable functions on $\mathbb{R}_+$, with locally integrable weak derivatives, and with the finite norm $\|h\|_w$. Elements of $H_w$ are equivalence classes. Recall that if $h\in L_{\text{loc}}^1(\mathbb{R}_+)$ has a weak derivative $h'\in L_{\text{loc}}^1(\mathbb{R}_+)$, then there exists an absolutely continuous representative of the equivalence class $h$ such that $h(x)-h(y)=\int_y^x h'(z)dz$. Thus, elements of $H_w$ have absolutely continuous representatives. We identify all financial quantities of interest, such as the forward curve, with absolutely continuous representatives. With some abuse of notation, in what follows we do not make the distinction between elements of $H_w$ that are equivalence classes and their absolutely continuous representatives.
The finiteness of the $H_w$-norm imposes tail decay on the derivative $h'$ of the function (the forward curve) with respect to time to maturity such that it decays to zero as time to maturity tends to infinity sufficiently fast so the derivative is square integrable with the weight function $w$, which is assumed to grow sufficiently fast so that  $\int_0^\infty w^{-1/3}(x)dx<\infty$.
By H{\"o}lder's inequality, it is easily seen that
$\int_{\mathbb{R}_+}|h'(x)|dx<\infty$ for all $h\in H_w$.
Thus, the absolutely continuous representative $h(x)$ converges to the limit $h(\infty)\in\mathbb{R}$ as $x\rightarrow\infty$, which can be interpreted as the {\em long forward rate}.  In other words, all forward curves in $H_w$ flatten out sufficiently fast at asymptotically long maturities.
We thus have the following result.
\begin{proposition}
\label{constant_long}
If the initial forward curve $f_0\in H_w$, then there exists a constant $\lambda$ such that
\be
\lim_{T\rightarrow \infty}P_0^{T-t}/P_0^T=e^{\lambda t},
\eel{Plimit}
where $\lambda=f_0(\infty)$ is the long forward rate.
\end{proposition}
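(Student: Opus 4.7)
The plan is to rewrite the ratio of bond prices in terms of an integral of the initial forward curve and then use the finite-limit property of functions in $H_w$ to pass to the limit.

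First, since the initial forward curve $f_0 \in H_w$ (with Musiela's parameterization $f_0(x)=f(0,x)$), the defining relation $P_0^T = \exp\bigl(-\int_0^T f_0(s)\,ds\bigr)$ gives
\begin{equation*}
\frac{P_0^{T-t}}{P_0^T} = \exp\!\left(\int_{T-t}^{T} f_0(s)\,ds\right).
\end{equation*}
Hence it suffices to show that $\int_{T-t}^{T} f_0(s)\,ds \to \lambda t$ as $T \to \infty$, where $\lambda := f_0(\infty)$.

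Next, I would invoke the fact already recorded in the excerpt: by H\"older's inequality and the weight condition $\int_0^\infty w^{-1/3}(x)\,dx < \infty$, every $h \in H_w$ satisfies $\int_0^\infty |h'(x)|\,dx < \infty$, so its absolutely continuous representative has a finite pointwise limit $h(\infty) \in \mathbb{R}$. Applied to $f_0$, this both defines $\lambda = f_0(\infty)$ and yields the quantitative tail estimate
\begin{equation*}
|f_0(s) - \lambda| = \left|\int_s^\infty f_0'(u)\,du\right| \le \int_s^\infty |f_0'(u)|\,du =: \varepsilon(s),
\end{equation*}
where $\varepsilon(s) \downarrow 0$ as $s \to \infty$.

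Finally, for fixed $t > 0$ and $s \in [T-t,T]$ we have $|f_0(s)-\lambda| \le \varepsilon(T-t)$, so
\begin{equation*}
\left|\int_{T-t}^{T} f_0(s)\,ds - \lambda t\right| \le t\,\varepsilon(T-t) \longrightarrow 0 \quad \text{as } T \to \infty.
\end{equation*}
Exponentiating and using continuity of $\exp(\cdot)$ yields \eqref{Plimit}. There is no real obstacle here: the entire argument rests on the single analytical fact that membership in $H_w$ forces $f_0'$ to be globally integrable (hence $f_0$ to stabilize at infinity), which was established in the discussion preceding the proposition. The proof is essentially a direct unwinding of the bond-price formula together with this tail estimate.
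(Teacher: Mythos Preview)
Your proof is correct and follows the same route as the paper: both arguments rest on the fact that $f_0\in H_w$ forces $f_0'\in L^1(\mathbb{R}_+)$ so that $f_0(\infty)$ exists, after which the bond-price identity $P_0^{T-t}/P_0^T=\exp\bigl(\int_{T-t}^T f_0(s)\,ds\bigr)$ immediately yields the limit. The paper simply asserts that the conclusion ``immediately follows'' from this identity, whereas you have spelled out the $\varepsilon(T-t)$ tail estimate explicitly; there is no substantive difference in method.
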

Recall that we always identify forward curves with absolutely continuous representatives of elements of $H_w$. Denote the limiting value of the absolutely continuous forward curve $f_0(\infty)=\lambda$. Then Eq.\eqref{Plimit} immediately follows from the relationship between bond prices and forward rates, $P_t^T=e^{-\int_t^T f_t(u)du}$.

We note that \citet{linetsky_2014long} {\em assume} that the initial forward curve satisfies Eq.\eqref{Plimit} to derive a long-term factorization for semimartingale pricing kernels with the long bond factor in the form $B_t^\infty=e^{\lambda t}\pi_t$.  Here in the context of HJM models Proposition 1 is simply an immediate consequence of the Hilbert space structure assumed for  forward curves. From the financial point of view, the property \eqref{Plimit} is natural and only requires that the initial forward curve flatten out at asymptotically long maturities. While in empirical data we do not observe forward curves at asymptotically long maturity, we do typically observe that the term structure becomes flatter between 20 year and 30 years.
We stress that this behavior of the initial forward curve does not impose any restrictions on the term structure dynamics,  as opposed to, for instance, the ``low variance martingale" (LVM) assumption common in the literature on swap market models (cf. \citet{gaspar2016swap}).

The space $H_w$ equipped with $\|h\|_w$ is a separable Hilbert space (Theorem 5.1.1 in \citet{filipovic_2001consistency}).
Define a semigroup of translation operators on $H_w$ by $(T_tf)(x)=f(t+x).$
By \citet{filipovic_2001consistency} Theorem 5.1.1, it is strongly continuous in $H_w$, and we denote its infinitesimal generator by $D$. This is the operator that appears in the drift in Eq.\eqref{HJM_function} due to the Musiela re-parameterization.

We next give conditions on the drift and volatility such that the forward curve stays in $H_w$ as it evolves according to the HJM dynamics. First we need to introduce some additional notation. Define the subspace $H_w^0\subset H_w$ by
$H_w^0=\{f\in H_w \enskip\text{such that }f(\infty)=0\}.$
For any continuous function $f$ on $\mathbb{R}_+$, define a continuous function ${\cal S}f:\mathbb{R}_+\rightarrow\mathbb{R}$ by
\[
({\cal S}f)(x):=f(x)\int_0^x f(\eta)d\eta,\enskip x\in\mathbb{R}_+.
\]
This operator is used to conveniently express the celebrated HJM arbitrage-free drift condition.
By \citet{filipovic_2001consistency} Theorem 5.1.1, there exists a constant $K$ such that $\|{\cal S}h\|_{w}\leq K\| h\|_{w}^2$ for all $h\in H^0_w$.
Local Lipschitz property of $\cal S$ is proved in \citet{filipovic_2001consistency} Corollary 5.1.2, which is used to ensure existence and uniqueness of solution to the HJM equation. Namely,
${\cal S}$ maps $H_w^0$ to $H_w^0$ and is locally Lipschitz continuous:
\be
\|{\cal S}g-{\cal S}h\|_w\leq C(\| g\|_w+\| h\|_w)\| g-h\|_w, \forall g,h\in H_w^0,
\ee
where the constant $C$ only depends on $w$.

Next consider $\ell^2$, the Hilbert space of square-summable sequences, $\ell^2=\{v=(v_j)_{j\in\mathbb{N}}\in\mathbb{R}^\mathbb{N}|\| v\|_{\ell^2}^2:=\sum_{j\in\mathbb{N}}|v_j|^2<\infty\}.$
Let $e_j$ denote the standard orthonormal basis in $\ell^2$. For a separable Hilbert space $H$, let $L_2^0(H)$ denote the space of Hilbert-Schmidt operators from $\ell^2$ to $H$
with the Hilbert-Schmidt norm
$\|\phi\|^2_{L_2^0(H)}:=\sum_{j\in\mathbb{N}}\|\phi^j\|_{H}^2<\infty,$
where $\phi^j:=\phi e_j$. We shall identify the operator $\phi$ with its $H$-valued coefficients $(\phi^j)_{j\in\mathbb{N}}$.

We are now ready to give conditions on the HJM market price of risk and HJM volatility to ensure that the forward curve stays in the Hilbert space $H_w$.
Recall that we have a filtered probability space $(\Omega,{\mathscr F},({\mathscr F}_{t})_{t\geq 0},{\mathbb P})$. Let $\cal P$ denote the predictable sigma-field. For any metric space $G$, we denote by ${\cal B}(G)$ the Borel sigma-field of $G$.
\begin{assumption}
\label{HJM_assumption}
({\bf Conditions on Volatility, Market Price of Risk and Initial Forward Curve})\\
(i) The initial forward curve $f_0\in H_w$.\\
(ii) The (negative of the) market price of risk $\gamma$ is a measurable function from $(\mathbb{R}_+\times\Omega\times H_w,{\cal P}\otimes{\cal B}(H_w))$ into $(\ell^2, {\cal B}(\ell^2))$ such that there exists a function $\Gamma\in L^2(\mathbb{R}_+)$ that satisfies
\be
\|\gamma(t,\omega,h)\|_{\ell^2}\leq\Gamma(t)\,\  {\rm for\ all}\,\  (t,\omega,h).
\eel{D3}
(iii) The volatility $\sigma=(\sigma^j)_{j\in\mathbb{N}}$ is a measurable function from $(\mathbb{R}_+\times\Omega\times H_w,{\cal P}\otimes{\cal B}(H_w))$ into $(L_2^0(H_w^0),{\cal B}(L_2^0(H_w^0)))$. It is is assumed to be Lipschitz continuous in $h$ and uniformly bounded, i.e. there exist constants $D_1, D_2$ such that for all $(t,\omega)\in\mathbb{R}_+\times\Omega$ and $h,h_1,h_2\in H_\omega$
\be
\|\sigma(t,\omega,h_1)-\sigma(t,\omega,h_2)\|_{L_2^0(H_w)}\leq D_1\|h_1-h_2\|_{H_w},\quad \|\sigma(t,\omega,h)\|_{L_2^0(H_w)}\leq D_2.
\eel{cond_sigma}
\end{assumption}
In the case when $W^\mathbb{P}$ is finite-dimensional, simply replace $\ell^2$ with ${\mathbb R}^n$.
The drift $\mu_t=\mu(t,\omega,f_t)$ in \eqref{HJM_function} is defined by the HJM drift condition, that takes the form in our notation
$$\mu(t,\omega,f_t)=\alpha^{\text{HJM}}(t,\omega,f_t)-(\gamma \cdot \sigma)(t,\omega,f_t),$$ where
$$\alpha^{\text{HJM}}(t,\omega,f_t)=\sum_{j\in\mathbb{N}}{\cal S}\sigma^j(t,\omega,f_t).$$

The following theorem summarizes the properties of the HJM model  \eqref{HJM_function} in this setting (see \citet{filipovic_2001consistency} Theorem 5.2.1).
\begin{theorem}{\bf (HJM Model)}
\label{exist_HJM}
(i) Eq.\eqref{HJM_function} has a unique continuous weak solution.\\
(ii) For each $t\geq 0$
$$f_t(\infty)=f_0(\infty).$$
(iii) The pricing kernel has the risk-neutral factorization
$$S_t=\frac{1}{A_t}M_t$$
with the implied savings account $A_t$, the martingale $M_t$, and the risk-neutral measure  given by:
\be
A_t=\exp\big(\int_0^t f_s(0)ds\big),
\ee
\be
M_t=\exp\Big(\int_0^t \gamma_s \cdot dW^{\mathbb{P}}_s-\frac{1}{2}\int_0^t \|\gamma_s\|_{\ell^2}^2ds\Big),\,
\mathbb{Q}|_{\mathscr{F}_t}= M_t \mathbb{P}|_{\mathscr{F}_t}.
\ee
The process
$W_t^{\mathbb{Q}}:=W_t^{\mathbb{P}}-\int_0^t \gamma_s ds$
is an (infinite-dimensional) Brownian motion under $\mathbb{Q}$. \\
(iv) The $T$-maturity bond valuation process $P_t^T$ has the form under $\mathbb{P}$:
\[
\frac{P_t^T}{P_0^T}=A_t\exp\Big(\int_0^t \sigma_s^T \cdot \gamma_s  ds- \int_0^t \sigma_s^T\cdot dW^{\mathbb{P}}_s-\frac{1}{2} \int_0^t \|\sigma_s^T\|^2_{\ell^2}ds\Big),\quad t\in [0,T],
\]
where the volatility of the $T$-maturity bond is
$$\sigma^{T}_t=\int_0^{T-t}\sigma_t(u)du, \quad t\in [0,T].$$
The process under ${\mathbb Q}$ reads:
\be
\frac{P_t^T}{P_0^T}=A_t\exp\Big(- \int_0^t \sigma_s^T\cdot dW^{\mathbb{Q}}_s-\frac{1}{2} \int_0^t \|\sigma_s^T\|^2_{\ell^2}ds\Big),\quad t\in [0,T].
\ee
\end{theorem}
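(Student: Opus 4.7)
The plan is to verify that Assumption \ref{HJM_assumption} places Eq.\eqref{HJM_function} within the scope of the existence and uniqueness theory for stochastic evolution equations in $H_w$ developed in \citet{filipovic_2001consistency}, and then to extract the remaining structural statements by direct computation from the mild-solution representation. For part (i), I would observe that the drift $\mu = \alpha^{\text{HJM}} - \gamma\cdot\sigma$, with $\alpha^{\text{HJM}} = \sum_{j}\mathcal{S}\sigma^j$, is locally Lipschitz with linear growth as a map $H_w\to H_w$: this combines the local Lipschitz property of $\mathcal{S}:H_w^0\to H_w^0$ (Corollary 5.1.2 of Filipović) with the Lipschitz continuity and uniform boundedness of $\sigma$ in $L_2^0(H_w^0)$ from Assumption \ref{HJM_assumption}(iii), together with the $L^2$-bound $\|\gamma_t\|_{\ell^2}\le\Gamma(t)$ from (ii). Since $D$ generates the strongly continuous translation semigroup $T_t$ on $H_w$, the standard SPDE theory (Da Prato--Zabczyk, packaged as Theorem 5.2.1 of \citet{filipovic_2001consistency}) then delivers a unique continuous mild solution in $H_w$.

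For part (ii), the key point is that the evaluation at infinity $h\mapsto h(\infty)$ is a continuous linear functional on $H_w$ (from $|h(\infty)-h(0)|\le\int_0^\infty|h'(x)|dx$ and Hölder's inequality, as noted in the text), and it annihilates every forcing term in the evolution: $\sigma^j_t$ and $\mathcal{S}\sigma^j_t$ lie in $H_w^0$ by Assumption \ref{HJM_assumption}(iii) and the range property of $\mathcal{S}$, while $(T_{t-s}h)(\infty)=h(\infty)$ since translation fixes the limit at infinity. Applying this functional to the mild form $f_t = T_t f_0 + \int_0^t T_{t-s}\mu_s\,ds + \int_0^t T_{t-s}\sigma_s\cdot dW^{\mathbb{P}}_s$ yields $f_t(\infty) = f_0(\infty)$. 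For part (iii), I would note that the $L^2$-bound on $\gamma$ immediately gives Novikov's condition on every $[0,T]$, so $M_t$ is a positive true $\mathbb{P}$-martingale; Girsanov's theorem in the cylindrical Brownian framework then makes $W^{\mathbb{Q}}$ a $\mathbb{Q}$-cylindrical Brownian motion, and the HJM drift condition $\mu = \alpha^{\text{HJM}} - \gamma\cdot\sigma$ is exactly what ensures that the discounted bond prices $P_t^T/A_t$ computed in (iv) are local $\mathbb{Q}$-martingales, identifying $\mathbb{Q}$ with the risk-neutral measure.

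For part (iv), I would compute $d\log P_t^T = -dF_t$ with $F_t := \int_t^T f_t(u)du$. Using the Musiela change of variables $\int_t^T (Df_t)(u-t)du = \int_0^{T-t}\partial_x f_t(x)dx = f_t(T-t) - f_t(0)$, the dynamics of $F_t$ reduce to $dF_t = -f_t(0)\,dt + \int_0^{T-t}\mu_t(x)dx\,dt + \int_0^{T-t}\sigma_t(x)dx\cdot dW^{\mathbb{P}}_t$; substituting the HJM drift condition together with the identity $\int_0^{T-t}(\mathcal{S}\sigma^j_t)(x)dx = \tfrac12|\sigma^{T,j}_t|^2$ and applying Itô's lemma to $P_t^T = e^{-F_t}$ yields the stated $\mathbb{P}$-representation, and replacing $dW^{\mathbb{P}}$ by $dW^{\mathbb{Q}}+\gamma_t\,dt$ collapses the drift to $-\tfrac12\|\sigma^T_t\|_{\ell^2}^2$ under $\mathbb{Q}$. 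The main obstacle is part (i): establishing that the infinite-dimensional SDE admits a solution that genuinely remains in $H_w$ relies on the full SPDE machinery and on the delicate local Lipschitz estimate for $\mathcal{S}$ on $H_w^0$, which is the technical heart of Filipović's Theorem 5.2.1; once that is invoked, parts (ii)--(iv) are direct applications of the mild-solution formula, Girsanov's theorem, and the HJM drift condition.
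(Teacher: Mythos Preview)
Your treatment of parts (ii)--(iv) is correct and matches the paper's approach (the paper is terser, citing Filipovi\'c's Eq.~(4.17) for the bond representation and noting $\alpha^{\text{HJM}}\in H_w^0$ for the constancy of $f_t(\infty)$, but the content is the same). However, your argument for part (i) has a genuine gap. You assert that the full drift $\mu=\alpha^{\text{HJM}}-\gamma\cdot\sigma$ is locally Lipschitz in $h$, but Assumption~\ref{HJM_assumption}(ii) imposes \emph{no regularity whatsoever} on $\gamma$ in the $h$-variable beyond measurability and the pointwise bound $\|\gamma(t,\omega,h)\|_{\ell^2}\le\Gamma(t)$. Without a Lipschitz hypothesis on $\gamma$, the difference
\[
\gamma(t,\omega,h_1)\cdot\sigma(t,\omega,h_1)-\gamma(t,\omega,h_2)\cdot\sigma(t,\omega,h_2)
\]
contains a term $(\gamma(h_1)-\gamma(h_2))\cdot\sigma(h_2)$ that cannot be controlled by $\|h_1-h_2\|_{H_w}$, so the direct invocation of the SPDE existence theorem under $\mathbb{P}$ is not justified.

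The paper circumvents this by reversing the order of operations: it \emph{first} establishes existence and uniqueness of a continuous weak solution under $\mathbb{Q}$, i.e.\ for the equation with $\gamma\equiv 0$, where the drift is $\alpha^{\text{HJM}}$ alone. By Filipovi\'c's Lemma~5.2.2 this drift is (globally) Lipschitz and uniformly bounded in $h$, so Theorem~2.4.1 of \citet{filipovic_2001consistency} applies directly. Only afterwards does the paper pass from $\mathbb{Q}$ to $\mathbb{P}$ via Girsanov's theorem, using Novikov's condition (which follows from $\|\gamma\|_{\ell^2}\le\Gamma\in L^2(\mathbb{R}_+)$). Since the Girsanov transformation leaves the process $f_t$ unchanged and only re-expresses the driving Brownian motion, existence and uniqueness under $\mathbb{P}$ are inherited without ever needing $\gamma$ to be Lipschitz in $h$. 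This two-step route is not a stylistic choice but the mechanism that makes the weak assumption on $\gamma$ sufficient.
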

For the definition of a weak solution used here see \citet{filipovic_2001consistency} Definition 2.4.1. The proof of Theorem \ref{exist_HJM} follows from the results in \citet{filipovic_2001consistency} and  is summarized for the readers' convenience in  Appendix.

We  are now ready to formulate the long-term factorization in HJM models.
First we observe that by Theorem \ref{exist_HJM} the product of the pricing kernel and the gross return on the $T$-maturity bond
\be
M_t^T:=S_t\frac{P_t^T}{P_0^T}=\exp\left(-\frac{1}{2}\int_0^t\|\gamma_s-\sigma_s^T\|_{\ell^2}^2ds+\int_0^t(\gamma_s-\sigma_s^T)\cdot dW_s^\mathbb{P}\right)
\ee
is a positive $\mathbb{P}$-martingale on $t\in[0,T]$ starting at unity, $M_0^T=1$. We can use it to define a new probability measure $\mathbb{Q}^T$ on $\mathscr{F}_T$ by $\mathbb{Q}^T|_{\mathscr{F}_T}=M_T^T\mathbb{P}|_{\mathscr{F}_T}$. $\mathbb{Q}^T$ is the $T$-forward measure originally introduced by \citet{jarrow_1987pricing} and by \citet{geman_1989importance} and \citet{jamshidian_1989exact}. Under ${\mathbb Q}^T$ the $T$-maturity zero-coupon bond serves as the numeraire.
We are interested in the long-term limit $T\rightarrow\infty$. Taking the limit naively in the expression below and writing $M_t^\infty:=S_t\frac{P_t^\infty}{P_0^\infty}$ will not generally work because $P_t^\infty$ will typically vanish due to discounting over an infinite horizon. Nevertheless, the limit of the ratio, $\lim_{T\rightarrow\infty}\frac{P_t^T}{P_0^T}$, can be made precise in general semimartingale models.  \citet{linetsky_2014long} define this limit in Emery's semimartingale topology (see \citet{emery_1979topologie} and \citet{linetsky_2014long} for the definition of Emery's distance). To this end, it is first convenient to extend the process $P_t^T/P_0^T$ to all $t\in[0,\infty)$ beyond $[0,T]$ by considering a self-financing roll-over strategy that starts at time zero by investing one unit of account in $1/P_{0}^T$ units of the $T$-maturity zero-coupon bond. At time $T$ the bond matures, and the value of the strategy is $1/P_{0}^T$ units of account. We roll the proceeds over by re-investing into $1/(P_{0}^T P_{T}^{2T})$ units of the zero-coupon bond with maturity $2T$. We continue with the roll-over strategy, at each time $kT$ re-investing the proceeds into the bond $P_{kT}^{(k+1)T}$. We denote the wealth process of this self-financing strategy by $B_t^T$,
$B_t^T = \left(\prod_{i=0}^k P_{iT}^{(i+1)T}\right)^{-1} P_{t}^{(k+1)T},$ $t\in [kT,(k+1)T),$ $k=0,1,\ldots.$
It is clear that $B_t^T$ extends $P_t^T/P_0^T$ to all times $t\geq0$.
As a consequence, the product $S_tB_t^T$ also extends the martingale $M_t^T$ to all times $t\geq0$. We continue to use the notation $M_t^T$ for $S_tB_t^T$. $M_t^T$ now defines a new probability measure for all $t\geq0$, and we still denote it by $\mathbb{Q}^T$.
With these preparations completed, we are now ready to formulate the long-term factorization in HJM models.

\begin{theorem}{\bf (Long-Term Factorization in HJM Models)}
\label{main_HJM}
Suppose the initial forward curve $f_0$ and the market price of risk $\gamma_t$ satisfy Assumption \ref{HJM_assumption} (i) and (ii). Suppose the
volatility $\sigma=(\sigma^j)_{j\in\mathbb{N}}$ is a measurable function from $(\mathbb{R}_+\times\Omega\times H_w,{\cal P}\otimes{\cal B}(H_w))$ into $(L_2^0(H_{\bar{w}}^0),{\cal B}(L_2^0(H_{\bar{w}}^0)))$ and is Lipschitz continuous in $h$ and uniformly bounded as in Assumption \ref{HJM_assumption} (ii), where
$H_{\bar{w}}^0\subseteq H_w^0$ with $\bar{w}$ satisfying $\int_0^\infty \bar{w}^{-1/3}(x)dx<\infty$  and having the large-$x$ asymptotics:
\be
\frac{1}{\bar{w}(x)}=O(x^{-(3+\epsilon)})
\eel{wprime}
for some $\epsilon>0$. Then the following results hold. \\
(i) $(M_t^T)_{t\geq0}$ converge to a positive martingale $M_t^\infty$ in Emery's semimartingale topology. $(B_t^T)_{t\geq0}$ converge to a positive process $B_t^\infty$ in Emery's semimartingale topology. $\mathbb{Q}^T$ converge to a limiting measure $\mathbb{Q}^\infty$ in total variation norm.\\
(ii) The HJM pricing kernel admits the long-term factorization
\be
S_t=e^{-\lambda t}\frac{1}{\pi_t}M_t^\infty,
\ee
where $B_t^\infty=e^{\lambda t}\pi_t$ is the long bond process.
\\
(iii) The process
\be
\sigma_t^\infty:=\int_0^\infty \sigma_t(u)du
\eel{sigmainf}
is well defined, and the long bond process
$B_t^\infty$ satisfies:
\be
B_t^\infty=A_t\exp\big(\int_0^t \sigma_s^\infty \cdot \gamma_s  ds-\int_0^t \sigma_s^\infty\cdot dW^{\mathbb{P}}_s-\frac{1}{2} \int_0^t \|\sigma_s^\infty \|_{\ell^2}^2ds\big)
\eel{HJM_longbond}
with volatility $\sigma_t^\infty$.\\
(iv) The martingale $M_t^\infty$ satisfies:
\be
M_t^\infty=\exp\Big(\int_0^t \gamma_s^\infty dW_s^\mathbb{P} -\frac{1}{2}\int_0^t \|\gamma^\infty_s\|_{\ell^2}^2ds \Big),
\eel{HJMMinfty}
where the market price of risk is
$$\gamma_t^\infty=\gamma_t-\sigma_t^\infty.$$
(v) The measure $\mathbb{Q}^\infty$ is given by
$\frac{d\mathbb{Q}^\infty}{d\mathbb{P}}|_{\mathscr{F}_t}=M_t^\infty.$
Under $\mathbb{Q}^\infty$,
$W^{\mathbb{Q}^\infty}_t:=W^{\mathbb{P}}_t-\int_0^t \gamma_s^\infty ds$
is a standard Brownian motion, and the $\mathbb{Q}^\infty$-dynamics of the forward curve, the pure discount bond, and the long bond are:
\be
df_t=(Df_t+\alpha^{HJM}_t-\sigma_t^{\infty}\cdot \sigma_t)dt+\sigma_t\cdot dW^{\mathbb{Q}^\infty}_t,
\ee
\be
\frac{P_t^T}{P_0^T}=A_t\exp\Big(\int_0^t \sigma_s^T \cdot \sigma^\infty_s  ds- \int_0^t \sigma_s^T\cdot dW^{\mathbb{Q}^\infty}_s-\frac{1}{2} \int_0^t \|\sigma_s^T\|_{\ell^2}^2ds\Big),\quad t\in [0,T],
\ee
\be
B_t^\infty=A_t\exp\big(-\int_0^t \sigma_s^\infty\cdot dW^{\mathbb{Q}^\infty}_s+\frac{1}{2} \int_0^t \|\sigma_s^\infty \|_{\ell^2}^2ds\big),\quad t\geq 0.
\ee
\end{theorem}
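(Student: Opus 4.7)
The strategy is to reduce the claim to the abstract semimartingale sufficient condition of \citet{linetsky_2014long}: once $(M_t^T)_{T\ge 0}$ is shown to converge in Emery's topology to a strictly positive $\mathbb{P}$-martingale $M^\infty$, parts (i) and (ii) and total variation convergence of the measures follow from their general theory, and the explicit formulas in (iii)--(v) are obtained from the stochastic exponential representation
\[
M_t^T=\mathcal{E}\Big(\int_0^{\cdot} (\gamma_s-\sigma_s^T)\cdot dW_s^\mathbb{P}\Big)_t
\]
provided by Theorem \ref{exist_HJM} together with Girsanov. The real work is therefore to identify the pointwise limit $\sigma^\infty$ and to establish sufficient quantitative control on $\sigma^T-\sigma^\infty$.

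The first step is to show $\sigma_t^\infty=\int_0^\infty \sigma_t(u)\,du\in\ell^2$ with $\|\sigma_t^\infty\|_{\ell^2}$ uniformly bounded in $(t,\omega)$. Since each coordinate $\sigma_t^j$ lies in $H^0_{\bar{w}}$ and so vanishes at infinity, it admits the representation $\sigma_t^j(x)=-\int_x^\infty (\sigma_t^j)'(y)\,dy$; combining Cauchy--Schwarz with the tail hypothesis $\bar{w}^{-1}(y)=O(y^{-(3+\epsilon)})$ yields $|\sigma_t^j(x)|\le C\,x^{-(1+\epsilon/2)}\|\sigma_t^j\|_{\bar{w}}$ for $x$ large. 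Integrating in $x$ over $\mathbb{R}_+$, summing over $j$, and using the Hilbert--Schmidt bound $\|\sigma_t\|_{L_2^0(H_{\bar{w}})}\le D_2$ then gives $\|\sigma_t^\infty\|_{\ell^2}\le CD_2$ uniformly. Applying the same chain of estimates to the tail remainder $\sigma_s^T-\sigma_s^\infty=-\int_{T-s}^\infty \sigma_s(u)\,du$ produces the quantitative rate
\[
\sup_{s\in[0,t]}\|\sigma_s^T-\sigma_s^\infty\|_{\ell^2}\le CD_2\,(T-t)^{-\epsilon/2}
\]
whenever $T>t$, uniform in $\omega$.

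Given these estimates, I would define $M^\infty$ by the stochastic exponential in \eqref{HJMMinfty}, verify that it is a true martingale on every $[0,t]$ via Novikov using boundedness of $\gamma-\sigma^\infty$, and then compare $M^T$ with $M^\infty$. The quadratic variation of the difference of their stochastic logarithms is bounded by $\int_0^t\|\sigma_s^T-\sigma_s^\infty\|_{\ell^2}^2\,ds=O\bigl(t(T-t)^{-\epsilon}\bigr)$, so a standard It\^{o}-isometry together with a Gronwall estimate for the SDE satisfied by $M^T-M^\infty$ gives convergence in $\mathcal{H}^2$ on every $[0,t]$, hence in Emery's topology. Total variation convergence of $\mathbb{Q}^T|_{\mathscr{F}_t}$ then follows from $L^1$ convergence of the densities. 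Defining $B_t^\infty:=M_t^\infty/S_t$ and simplifying through the risk-neutral factorization of Theorem \ref{exist_HJM} produces \eqref{HJM_longbond}; parts (iv) and (v) drop out by Girsanov applied to the bounded drift shift $\sigma^\infty$ and by reading off the dynamics of $f_t$, $P_t^T/P_0^T$, and $B_t^\infty$ under the new measure.

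I expect the main obstacle to be the uniform estimate on $\sigma^T-\sigma^\infty$ in the second paragraph: the exponent $3+\epsilon$ in \eqref{wprime} is precisely what makes $\bigl(\int_x^\infty \bar{w}^{-1}(y)\,dy\bigr)^{1/2}$ integrable on $\mathbb{R}_+$, which simultaneously guarantees that one can interchange the radial integration with the Hilbert--Schmidt sum over $j$ to define $\sigma^\infty$ and supplies the explicit polynomial rate needed for Emery convergence. Weakening this tail condition destroys both properties at once, so this single Hilbert-space estimate is the real engine of the proof.
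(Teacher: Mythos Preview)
Your proposal is correct and follows the paper's overall strategy: establish the tail estimate on $\sigma$ from the $\bar w$-assumption, use it to define $\sigma^\infty$ and control $\sigma^T-\sigma^\infty$, prove convergence of $M^T$ to the explicit stochastic exponential $M^\infty$, and then invoke the abstract results of \citet{linetsky_2014long} for the Emery convergence, the long-term factorization, and the measure convergence. Your identification of Lemma~\ref{HJM_additional1}-type pointwise bound $|\sigma_t^j(x)|\le C\,x^{-(1+\epsilon/2)}\|\sigma_t^j\|_{\bar w}$ as the engine is exactly right.

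The one genuine technical difference is in how the convergence $M^T\to M^\infty$ is executed. The paper passes to $\mathbb Q$ (so that $\gamma$ drops out), writes the discounted bond as $e^{-j_t^T-k_t^T}$, and then proves $L^1(\mathbb Q)$-convergence via a H\"older split $\|e^{-j_t^\infty-k_t^\infty}\|_2\cdot\|e^{-(j_t^T-j_t^\infty)-(k_t^T-k_t^\infty)}-1\|_2$ together with an auxiliary exponential local martingale $Y^T$. You instead stay under $\mathbb P$, write the SDE for $M^T-M^\infty$, and close it by It\^o isometry plus Gronwall; this directly yields $\mathcal H^2$-convergence on compacts and hence Emery convergence without the multiplicative decomposition. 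Your route is arguably cleaner, at the cost of carrying $\gamma$ through the estimates (which is harmless since $\|\gamma_s\|_{\ell^2}\le\Gamma(s)$ is a deterministic $L^2$ bound, so Novikov and the Gronwall kernel are both controlled---note that $\gamma-\sigma^\infty$ is not literally bounded as you wrote, only dominated by $\Gamma+C D_2\in L^2_{\mathrm{loc}}$, but that is all you need). The paper's route, by stripping off $\gamma$ first, isolates the role of the volatility tail more visibly. Either argument feeds the same $L^1$ condition into Theorems~3.1--3.2 of \citet{linetsky_2014long}, from which (i)--(iv) and then (v) via Girsanov follow.
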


The proof is given in Appendix. The long bond $B_t^\infty$ is the gross return from holding a zero-coupon bond of asymptotically infinite maturity from time $0$ to time $t$. $\mathbb{Q}^\infty$ is termed the long forward measure (also denoted as $\mathbb{L}$ in \citet{linetsky_2014long}) as the limit of the $T$-forward measure. The sufficient condition on the forward curve volatility to ensure existence of the long bond, long-term factorization and long forward measure is a strengthening of Filipovic's condition on forward curve volatility in Assumption \ref{HJM_assumption} for existence of a solution of the HJM SDE \eqref{HJM_function} in $H_w$. As a sufficient condition for existence of the long-term factorization, we require that the weight function $\bar{w}$ in the weighted Sobolev space $H_{\bar{w}}$ where the volatility components take their values satisfy the asymptotics \eqref{wprime}, a strengthening of Filipovic's assumption on the weight $w$ in the definition of the norm of the space $H_w$ where the forward curves themselves evolve. Alternatively, we can assume that $w$ satisfies the asymptotics \eqref{wprime} and work with this smaller function space from the beginning. However, this is not necessary, and so we leave the assumptions on the space of forward curves unchanged and the same as in Filipovic, while imposing the sufficient condition on forward rate volatility to ensure existence of the long-term factorization.
Typical choices of weight function are $w(x)=e^{\alpha x}$ for $\alpha>0$ and $w(x)=(1+x)^\alpha$ for $\alpha>3$, which both satisfy the asymptotics \eqref{wprime}. An example that satisfies Eq.\eqref{cond_w} but not \eqref{wprime} is $w(x)=(1+x)^3 \left(\log (2+x)\right)^6$.

We note that the mechanism that ensures existence of the long-term limit in HJM models is the combination of the sufficiently fast flattening of the forward curve at long maturities that is ensured by the structure of the space $H_w$, as well as sufficiently fast decay of the forward curve volatility for long maturities that is ensured by the structure of the space $H_{\bar{w}}$. In particular, the forward curve flattens sufficiently fast for long maturities so that the long forward rate process $f_t(\infty)$ exists and is constant (part (ii) in Theorem \ref{exist_HJM}). The forward curve volatility decays fast enough that the long bond volatility given by the integral \eqref{sigmainf} is well defined. Verification that the volatility of the long bond \eqref{sigmainf} is well defined constitutes the key part of the proof (see Appendix).

Theorem \ref{main_HJM} provides a fully explicit construction of the long-term factorization in HJM models driven by an infinite-dimensional Brownian motion. The existence of the long bond is solely determined by the dynamics of forward curve. To ensure existence of the long bond, a sufficient condition is imposed on the volatility of the forward curve. An obvious necessary condition is that the long forward rate needs to be constant (if it exists), as otherwise the long bond will instantaneously decrease to zero when there is a (necessary positive due to the theorem of \citet{dybvig_1996long}) shock to the long forward rate. The framework of \citet{filipovic_2001consistency} already ensures this necessary condition by restricting the space of forward curves and forward curve volatilities. Theorem 2 gives a sufficient condition on the space of volatility curves that ensures existence of the long bond and the long-term factorization of the HJM pricing kernel.
In particular, Theorem \ref{main_HJM} yields an explicit decomposition of the market price of risk in HJM models
$\gamma_t = \sigma_t^\infty+ \gamma_t^\infty$
into a component identified with the volatility of the long bond $\sigma^\infty_t$ and a component $\gamma_t^\infty$ defining the martingale $M^\infty_t$and, in turn, the long forward measure ${\mathbb{Q}^\infty}$.

\section{Example: Gaussian HJM Models}
\label{gauss_HJM}

Assume the initial forward curve $f_0\in H_w$ with some weight $w$ satisfying \eqref{cond_w}.
When the forward curve volatility is deterministic (independent of $\omega$ and the forward curve $f_t$), the conditions on volatility in Theorem \ref{main_HJM} simplify to requiring that $\sigma(t,\omega,h)=\sigma(t)\in L_2^0(H^0_{\bar{w}})$   for some weight $\bar{w}$ such that  $1/\bar{w}(x)=O(x^{-(3+\epsilon)})$ and that $\sigma(t)$ is uniformly bounded.
Under these assumptions, the forward curve follows a Gaussian process taking values in $H_w$ under both ${\mathbb Q}$ and ${\mathbb{Q}^\infty}$. It also follows a Gaussian process in $H_w$ under ${\mathbb P}$ if the market price of risk $\gamma_t$ (assumed to satisfy \eqref{D3}) is also deterministic. We note that this Gaussian process is generally not Markovian.

We now consider a special case with
\be
\sigma^j(t)(x) = \sigma^j(x)=\sigma_j e^{-\kappa_j x},
\eel{vasicekvol}
where $\sigma_j\geq0$, $\kappa_j\geq 0$ and $\sum_{j=1}^\infty \sigma_j^2(1+\kappa_j)<\infty$.

Let $\bar{w}(x)=x^{-4}\wedge1$. Then
\be
\begin{array}{ll}
\|\sigma_t\|^2_{L_2^0(H_{\bar{w}})} & = \displaystyle{\sum_{j=1}^\infty \|\sigma_t^j\|^2_{H_{\bar{w}}}}\\
& = \displaystyle{\sum_{j=1}^\infty \sigma_j^2\left(1+\int_0^\infty \kappa_j^2 e^{-2\kappa_jx}\bar{w}(x)dx\right)}\\
& \leq \displaystyle{\sum_{j=1}^\infty \sigma_j^2  \left(1+\int_0^\infty \kappa_j^2 e^{-2\kappa_jx}dx\right)}\\
& = \displaystyle{\sum_{j=1}^\infty \sigma_j^2 (1+\kappa_j/2)}<\infty.\\
\end{array}
\ee
Thus, $\sigma_t$ satisfies \eqref{cond_sigma}. This ensures the model satisfies all assumptions in Theorem \ref{main_HJM} and all results in Theorem \ref{main_HJM} hold.

To simplify notation, consider the scalar case with $\sigma_1>0$ and $\sigma_j=0$ for all $j>1$ and drop the index $1$ in $\sigma_1$, which is the so-called extended Vasicek model also known as the Hull-White model. The solution to the SDE \eqref{HJM_function} under the risk-neutral measure (setting the market price of risk $\gamma_t$ to zero) is explicit (cf. Carmona and Tehranci p.178):
$$
f_t(x)=f_0(t+x)+\frac{\sigma^2}{\kappa^2}e^{-\kappa x}(1-e^{-\kappa t})(1-e^{-\kappa x}(1+e^{\kappa t})/2)+\sigma e^{-\kappa x}\int_0^t e^{-\kappa(t-s)}dW^{\mathbb Q}_s.
$$
In the limit $x\rightarrow \infty$, we explicitly obtain the constant long forward rate:
$$
f_t(\infty)=f_0(\infty)=:\lambda
$$
for all $t\geq 0$ (the initial forward curve $f_0\in H_w$ possesses a long forward rate $f_0(\infty)$, and it is preserved in time under the HJM evolution as the forward curve evolves). We stress that this example is time-inhomogeneous in general, and so here $\lambda$ is defined as the limiting value of the forward rate and is {\em not} the principal eigenfunction of the pricing semigroup as in the Markovian case in \citet{hansen_2009}.

On the other hand, letting $x=0$, we have the extended Vasicek evolution for the short rate $r_t=f_t(0)$:
$$
r_t=f_0(t)+\frac{\sigma^2}{2\kappa^2}(1-e^{-\kappa t})^2+\sigma \int_0^t e^{-\kappa (t-s)}dW^{\mathbb Q}_s,
$$
which satisfies the SDE
$dr_t =\kappa(\theta_{\mathbb Q}(t)-r_t)dt + \sigma dW_t^{\mathbb Q}$
with the time-dependent parameter
$$
\theta_{\mathbb Q}(t)=\frac{1}{\kappa}f_0^\prime (t)+f_0(t)+\frac{\sigma^2}{2\kappa^2}(1-e^{-2\kappa t}).
$$
If, in particular, we require that $\theta_{\mathbb Q}$ is constant, we then obtain a constraint on the initial forward curve $\frac{1}{\kappa}f_0^\prime (t)+f_0(t)+\frac{\sigma^2}{\kappa^2}(1-e^{-\kappa t})=\theta_{\mathbb Q}$, whose solution is the initial forward curve in the time-homogeneous \citet{vasicek_1977equilibrium} model:
$$
f_0(t)=\left(\theta_\mathbb{Q}\kappa-\frac{\sigma^2(1-e^{-\kappa t})}{2\kappa}\right)\frac{1-e^{-\kappa t}}{\kappa}+r_0e^{-\kappa t}.
$$
Now, in this special case with constant parameters the model is time-homogeneous Markov, and the limiting value of the forward curve indeeds becomes the familiar principal eigenfunction in the Vasicek model:
$$
f_0(\infty)=\lambda=\theta_{\mathbb Q}-\sigma^2/(2\kappa).
$$

Returning to the extended Vasicek model with a general initial forward curve $f_0\in H_w$, the volatility of the long bond is constant:
$$\sigma_t^\infty=\sigma^\infty=\int_0^\infty\sigma e^{-\kappa x}dx=\frac{\sigma}{\kappa}$$
and the long bond has a simple ${\mathbb Q}$-dynamics:
$$
B_t^\infty = A_t e^{-\frac{\sigma}{\kappa} W_t^{\mathbb Q}-\frac{1}{2}\frac{\sigma^2}{\kappa^2}t},
$$
where $A_t=e^{\int_0^t r_s ds}$ is the savings account.

To further illustrate calculations in the simplest possible setting, we now also assume that the market price of risk is constant, $\gamma_t=\gamma$.  In this case $$\gamma^\infty = \gamma-\sigma^\infty=\gamma-\frac{\sigma}{\kappa}$$ is also constant, and $M_t^\infty$ is the exponential ${\mathbb P}$-martingale:
$$
M_t^\infty=e^{\gamma^\infty W^{\mathbb P}_t- \frac{1}{2}(\gamma^\infty)^2 t}.
$$
Then the forward curve has the ${\mathbb{Q}^\infty}$-measure dynamics:
$$
f_t(x)=f_0(t+x)-\frac{\sigma^2}{2\kappa^2}e^{-2\kappa x}(1-e^{-2\kappa t})+\sigma e^{-\kappa x}\int_0^t e^{-\kappa(t-s)}dW^{\mathbb{Q}^\infty}_s,
$$
and in particular for the short rate we obtain:
$$
r_t=f_0(t)-\frac{\sigma^2}{2\kappa^2}(1-e^{-2\kappa t})+\sigma \int_0^t e^{-\kappa (t-s)}dW^{\mathbb{Q}^\infty}_s,
$$
which satisfies the SDE
$dr_t =\kappa(\theta_{\mathbb{Q}^\infty}(t)-r_t)dt + \sigma dW_t^{\mathbb{Q}^\infty}$
with the time-dependent parameter under ${\mathbb{Q}^\infty}$:
$$
\theta_{\mathbb{Q}^\infty}(t)=\frac{1}{\kappa}f_0^\prime (t)+f_0(t)+\frac{\sigma^2}{2\kappa^2}(1+e^{-2\kappa t})=\theta_{\mathbb{Q}}(t)-\frac{\sigma^2}{\kappa^2}.
$$
The long bond ${\mathbb{Q}^\infty}$-dynamics is:
$$
B_t^\infty = A_t e^{-\frac{\sigma}{\kappa} W_t^{\mathbb{Q}^\infty}+\frac{1}{2}\frac{\sigma^2}{\kappa^2}t}.
$$
This example illustrates the long-term factorization in a simple setting of Gaussian models and makes explicit how the market price of Brownian risk $\gamma_t$ is explicitly decomposed into the volatility of the long bond $\sigma_t^\infty$ plus the market price of risk under the long forward measure $\gamma^\infty$ that defines the martingale component $M_t^\infty$ in the long term factorization.
According to the recent empirical evidence in the bond market, the latter component is large and highly economically significant, as it controls the shape of the term structure of bond Sharpe ratios. We refer the reader to \citet{linetsky2016bond}, where a particular Markovian specification is empirically estimated on the US Treasury data. In contrast, this paper provides a general decomposition of the market price of Brownian risk in non-Markovian HJM models and, in particular, in time-inhomogeneous Gaussian models, such as the multi-factor Hull-White-type models popular in practice.

\appendix

\section{Proof of Theorem 2}
To prepare for the proof of Theorem 2, we first briefly sketch the proof of Theorem 1, referring to \citet{filipovic_2001consistency} for details.
\noindent\emph{Proof of Theorem \ref{exist_HJM}.}
(i) We first consider the risk-neutral case with $\gamma=0$:
\be
df_t=(Df_t+\alpha_{\text{HJM}}(t,\omega,f_t))dt+\sum_{j\in\mathbb{N}}\sigma^j(t,\omega,f_t)dW^{\mathbb{Q},j}_t.
\eel{HJM_Q}
By Assumption \ref{HJM_assumption}, $\alpha_{\text{HJM}}(t,\omega,h)$ is Lipschitz continuous in $h$ and uniformly bounded (cf. \citet{filipovic_2001consistency} Lemma 5.2.2). Thus by Theorem 2.4.1 of \citet{filipovic_2001consistency}, Eq.\eqref{HJM_Q} has a unique continuous weak solution.

Uniqueness of a weak solution with non-zero $\gamma$ follows by the application of Girsanov's theorem.
We already have uniqueness of a weak solution with $\gamma=0$.
By \eqref{D3}, $\gamma$ satisfies Novikov's condition (cf. \citet{filipovic_2001consistency} Lemma 2.3.2). Thus, we can define a new measure $\mathbb{P}$ by
\be
\mathbb{P}|_{\mathscr{F}_t}=\exp\Big(-\frac{1}{2}\int_0^t\|\gamma_s\|^2_{\ell^2}ds-\int_0^t\gamma_s\cdot dW^{\mathbb{Q}}_s\Big)\mathbb{Q}|_{\mathscr{F}_t}.
\eel{QP}
Then by Girsanov's theorem for infinite-dimensional Brownian motion (cf. \citet{filipovic_2001consistency})
\be
W^{\mathbb{P}}_t=W^{\mathbb{Q}}_t+\int_0^t\gamma_s ds
\eel{W_QP}
is an infinite-dimensional standard Brownian motions under $\mathbb{P}$. Thus, $f_t$ is a unique weak solution  of the HJM equation \eqref{HJM_function} under $\mathbb{P}$ with general $\gamma$.\\
(ii) Since $\alpha^{\text{HJM}}\in H_w^0$, $f_t(\infty)$ is constant.\\
(iii) By \citet{filipovic_2001consistency} Theorem 5.2.1, zero-coupon bond price processes $(P_t^T/A_t)_{t\geq0}$ taken relative to the process $A_t=e^{\int_0^t f_s(0)ds}$  are $\mathbb{Q}$-martingales. This immediately yields the risk-neutral factorization of the pricing kernel under ${\mathbb P}$.\\
(iv) \citet{filipovic_2001consistency} Eq.(4.17) gives
\be
\frac{P_t^T}{P_0^T}=A_t\exp\Big(- \int_0^t \sigma_s^T\cdot dW^{\mathbb{Q}}_s-\frac{1}{2} \int_0^t \|\sigma_s^T\|_{\ell^2}^2ds\Big).
\ee
Using Eq.\eqref{W_QP} gives the bond dynamics under ${\mathbb P}$. $\Box$

\enskip\\
\emph{Proof of Theorem \ref{main_HJM}.}
We are now ready for the proof of Theorem 2. The proof consists of two parts. We first prove that the processes on the right hand side of \eqref{HJM_longbond} and \eqref{HJMMinfty} are well defined (the integrals in the exponential are well defined).
Next we prove that
\be
\frac{{\mathbb E}^{\mathbb P}_t[S_T]}{{\mathbb E}^{\mathbb P}[S_T]} \xrightarrow{\rm L^1} M_t^\infty\quad \text{as} \quad T\rightarrow \infty
\eel{PKL1}
with $M_t^\infty$ defined by the right hand side of \eqref{HJMMinfty}. By Theorem 3.1 and 3.2 of \citet{linetsky_2014long} and Proposition \ref{constant_long}, (i)-(iv) follows. The expression for $W_t^{\mathbb{Q}^\infty}$ then follows from Girsanov's Theorem (cf. \citet{filipovic_2001consistency} Theorem 2.3.3), and the SDE for $f_t$ under $\mathbb{Q}^\infty$ then follows immediately.

Since $M_t^\infty=S_tB_t^\infty$, we just need to prove right hand side of \eqref{HJM_longbond} is well defined. We first prove the following lemma which is central to all of the subsequent estimates.
\begin{lemma}
\label{HJM_additional1}
The following estimate holds for
any function $h\in H_{\bar{w}}^0$:
$$
\int_T^\infty |h(x)|dx\leq C(T)\|h\|_{\bar{w}},\quad \text{where}  \quad C(T)=K(T^{-\epsilon/2}\wedge 1)
$$
for some $K>0$ and $\epsilon>0$.
\end{lemma}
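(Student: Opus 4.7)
The plan is to exploit the vanishing of $h$ at infinity together with the tail decay of $1/\bar{w}$ to first obtain a pointwise decay estimate for $|h(x)|$, and then integrate it over $[T,\infty)$.

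First, since $h \in H_{\bar{w}}^0$, we have $h(\infty) = 0$, so the absolutely continuous representative satisfies
\[
h(x) = -\int_x^\infty h'(y)\,dy, \qquad x \ge 0.
\]
Applying the Cauchy-Schwarz inequality with the weight $\bar{w}(y)$ yields
\[
|h(x)| \le \int_x^\infty |h'(y)|\bar{w}(y)^{1/2}\bar{w}(y)^{-1/2}\,dy
\le \left(\int_x^\infty |h'(y)|^2 \bar{w}(y)\,dy\right)^{1/2}\!\!\left(\int_x^\infty \frac{dy}{\bar{w}(y)}\right)^{1/2}
\le \|h\|_{\bar{w}}\, g(x),
\]
where $g(x) := \left(\int_x^\infty \bar{w}(y)^{-1}\,dy\right)^{1/2}$. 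Note that $g$ is well-defined and bounded on $[0,\infty)$: on a bounded interval the bound $\bar{w} \ge 1$ makes the integrand bounded by $1$, and the tail is integrable by the assumption $\bar{w}(y)^{-1}=O(y^{-(3+\epsilon)})$.

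Next I extract the decay rate of $g$. Using $\bar{w}(y)^{-1} \le C_1 y^{-(3+\epsilon)}$ for $y \ge 1$, a direct computation gives
\[
\int_x^\infty \frac{dy}{\bar{w}(y)} \le \frac{C_1}{2+\epsilon}\, x^{-(2+\epsilon)} \quad\text{for } x \ge 1,
\]
hence $g(x) \le C_2\, x^{-(1+\epsilon/2)}$ for $x \ge 1$, while $g(x) \le g(0) < \infty$ for all $x$.

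Finally, integrating the pointwise bound over $[T,\infty)$,
\[
\int_T^\infty |h(x)|\,dx \le \|h\|_{\bar{w}} \int_T^\infty g(x)\,dx =: \|h\|_{\bar{w}}\,\widetilde{C}(T).
\]
For $T \ge 1$ the above decay estimate yields $\widetilde{C}(T) \le C_2\!\int_T^\infty x^{-(1+\epsilon/2)}dx = (2C_2/\epsilon)\,T^{-\epsilon/2}$, while for $0 \le T \le 1$ we use $\widetilde{C}(T) \le \widetilde{C}(0) = \int_0^\infty g(x)\,dx < \infty$. Choosing $K$ large enough that both cases are dominated by $K(T^{-\epsilon/2} \wedge 1)$ gives the claim. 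There is no real obstacle here; the role of the strengthened asymptotic \eqref{wprime} (versus Filipovic's weaker $\int w^{-1/3}<\infty$) is precisely to make $g$ decay fast enough that $\int g\,dx$ is summable with the explicit rate $T^{-\epsilon/2}$, which in turn drives all subsequent estimates in the proof of Theorem \ref{main_HJM}.
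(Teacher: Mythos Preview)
Your proof is correct and follows essentially the same approach as the paper: represent $h(x)=-\int_x^\infty h'(y)\,dy$ using $h(\infty)=0$, apply Cauchy--Schwarz with the weight $\bar{w}$ to get $|h(x)|\le \|h\|_{\bar{w}}\,(\int_x^\infty \bar{w}^{-1})^{1/2}$, use the asymptotics \eqref{wprime} to obtain the $x^{-(1+\epsilon/2)}$ pointwise decay, and integrate. Your treatment is slightly more detailed in separating the cases $T\ge 1$ and $T\le 1$, whereas the paper handles both at once via the $\wedge 1$ notation, but the arguments are the same.
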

\begin{proof}
Since $\bar{w}(x)\geq 1$, for all $h\in H_{\bar{w}}^0$ we can write
$$
|h(x)|=\left|\int_x^\infty h'(s)ds\right|\leq \| h\|_{\bar{w}} \left(\int_x^\infty \frac{ds}{\bar{w}(s)}\right)^{1/2}
$$
$$
\leq \|h\|_{\bar{w}}\left(\int_x^\infty K(s^{-(3+\epsilon)}\wedge1)ds\right)^{1/2}
$$
$$
\leq \|h\|_{\bar{w}} K(x^{-(1+\epsilon/2)}\wedge1),
$$
where the constant $K$ can change from step to step. Thus, $\int_T^\infty |h(x)|dx\leq \|h\|_{\bar{w}}K(T^{-\epsilon/2}\wedge1)$.
\end{proof}

Lemma \ref{HJM_additional1} ensures that each element of the vector $\sigma_t^\infty$ in (3.9) is well defined. The next lemma ensures that $\sigma_t^\infty\in{\ell^2}$ and the RHS of \eqref{HJM_longbond} is well defined.
\begin{lemma}
\label{est_sigma}
$\int_0^t\|\sigma_s^\infty\|_{\ell^2}^2ds\leq C^2(0) tD_2^2$.
\end{lemma}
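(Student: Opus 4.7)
The plan is to apply Lemma \ref{HJM_additional1} componentwise with $T=0$, then sum over Brownian indices using the Hilbert--Schmidt (uniform) bound on $\sigma$. First I would note that since $\sigma_t$ takes values in $L_2^0(H_{\bar{w}}^0)$, each component $\sigma_t^j$ belongs to $H_{\bar{w}}^0$, so Lemma \ref{HJM_additional1} applies to it. Setting $T=0$ in that lemma gives
\[
\int_0^\infty |\sigma_t^j(u)|\,du \leq C(0)\,\|\sigma_t^j\|_{\bar{w}},
\]
so the $j$-th component of the vector $\sigma_t^\infty$, namely $(\sigma_t^\infty)^j=\int_0^\infty \sigma_t^j(u)\,du$, is well defined and satisfies
\[
|(\sigma_t^\infty)^j|^2 \leq C(0)^2\,\|\sigma_t^j\|_{\bar{w}}^2.
\]

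Next I would sum over $j\in\mathbb{N}$ to pass from componentwise to $\ell^2$ control:
\[
\|\sigma_t^\infty\|_{\ell^2}^2 \;=\; \sum_{j\in\mathbb{N}} |(\sigma_t^\infty)^j|^2 \;\leq\; C(0)^2 \sum_{j\in\mathbb{N}} \|\sigma_t^j\|_{\bar{w}}^2 \;=\; C(0)^2\,\|\sigma_t\|_{L_2^0(H_{\bar{w}})}^2.
\]
By the uniform bound in Assumption \ref{HJM_assumption}(iii) (as strengthened in the hypotheses of Theorem \ref{main_HJM}), $\|\sigma_t\|_{L_2^0(H_{\bar{w}})} \leq D_2$ for all $(t,\omega)$, hence $\|\sigma_t^\infty\|_{\ell^2}^2 \leq C(0)^2 D_2^2$ pathwise. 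Integrating this deterministic bound over $s\in[0,t]$ delivers the claim
\[
\int_0^t \|\sigma_s^\infty\|_{\ell^2}^2\,ds \leq C(0)^2 t D_2^2.
\]

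There is no real obstacle here beyond making sure the interchange of $\int_0^\infty$ and the $\ell^2$ norm is legitimate; this is why I would explicitly apply the componentwise estimate before summing, so that the finiteness of $\|\sigma_t\|_{L_2^0(H_{\bar{w}})}$ immediately forces $\sigma_t^\infty\in\ell^2$ and justifies viewing $\sigma_t^\infty$ as a genuine $\ell^2$-valued process rather than a formal series. The result then follows by a single application of Lemma \ref{HJM_additional1} at $T=0$ combined with the Hilbert--Schmidt uniform bound, with no additional probabilistic or analytic input required.
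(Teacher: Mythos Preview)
Your proof is correct and follows essentially the same approach as the paper: apply Lemma \ref{HJM_additional1} with $T=0$ to each component $\sigma_s^j$, square and sum over $j$ to obtain $\|\sigma_s^\infty\|_{\ell^2}^2\leq C^2(0)\|\sigma_s\|_{L_2^0(H_{\bar{w}})}^2\leq C^2(0)D_2^2$, then integrate over $[0,t]$. The only cosmetic difference is that the paper writes the time integral first and bounds the integrand inside, whereas you bound pointwise in $s$ and integrate at the end; the content is identical.
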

\begin{proof}
By Lemma \ref{HJM_additional1}, $\int_0^\infty |\sigma_s^j(u)|du\leq C(0)\|\sigma_s^j\|_{\bar{w}}$. This implies
\be
\begin{array}{ll}
\displaystyle{\int_0^t\|\sigma_s^\infty\|_{\ell^2}^2ds}
&\displaystyle{\leq \int_0^t\sum_{j\in\mathbb{N}}\left(\int_0^\infty |\sigma_s^j(u)| du\right)^2 ds}\\
&\displaystyle{\leq\int_0^t\sum_{j\in\mathbb{N}}C^2(0) \|\sigma_s^j\|_{\bar{w}}^2 ds}\\
&\displaystyle{=C^2(0)\int_0^t \|\sigma_s\|^2_{L_2^0(H_{\bar{w}})}ds}\\
&\leq C^2(0) tD_2^2,\\
\end{array}
\ee
where $D_2$ is the volatility bound in Eq.\eqref{cond_sigma}.
\end{proof}
By above lemma, the last integral in \eqref{HJM_longbond} is well defined. The stochastic integral $ \int_0^t \sigma_s^\infty \cdot dW^{\mathbb{P}}_s$ is well defined due to It\^{o}'s isometry. The first integral is bounded by
\be
\frac{1}{2}\int_0^t \big(\|\gamma_s\|_{\ell^2}^2+\|\sigma_s^\infty\|_{\ell^2}^2\big)ds\leq\frac{1}{2}\int_0^t\Gamma(s)^2ds+\frac{1}{2}C^2(0)tD_2^2,
\ee which is well defined by the fact that $\Gamma\in L_2(\mathbb{R}_+)$.  Thus the right hand side of \eqref{HJM_longbond} is well defined.

We now turn to the verification of Eq.\eqref{PKL1}.
We first re-write $P_t^T/P_0^T$ and $B_t^\infty$ defined by Eq.\eqref{HJM_longbond} in terms of $\mathbb{Q}$-Brownian motion $W_t^{\mathbb{Q}}$:
\be
\frac{P_t^T}{P_0^T}=A_t\exp\Big(- \int_0^t \sigma_s^T\cdot dW^{\mathbb{Q}}_s-\frac{1}{2} \int_0^t \|\sigma_s^T\|^2_{\ell^2}ds\Big),
\ee
\be
B_t^\infty=A_t\exp\Big(- \int_0^t \sigma_s^\infty\cdot dW^{\mathbb{Q}}_s-\frac{1}{2} \int_0^t\|\sigma_s^\infty\|_{\ell^2}^2ds\Big).
\ee

Fix the current $t\geq 0$. We note that the condition \eqref{PKL1} can be written under any locally equivalent  probability measure ${\mathbb Q}^V$ associated with any valuation process $V$:
\be
\lim_{T\rightarrow\infty}\mathbb{E}^{\mathbb{Q}^V}[|B_t^T/V_t-B_t^\infty/V_t|]=0.
\eel{L1_alter}
We can use this freedom to choose the measure convenient for the setting at hand. Here we choose to verify it under $\mathbb{Q}$, i.e.
\be
\lim_{T\rightarrow\infty}\mathbb{E}^{\mathbb{Q}}\left[\left|\frac{P_t^T}{P_0^TA_t}- \frac{B_t^\infty}{A_t}\right|\right]=0.
\eel{L1_HJM_target}

We first introduce some notation.
For $v\in[0,t]$ and $T\in[t,\infty]$ define the quantities
\begin{gather}
j_v^T:= \int_0^v \sigma_s^T\cdot dW^{\mathbb{Q}}_s,\enskip k_v^T:=\frac{1}{2} \int_0^v \|\sigma_s^T\|_{\ell^2}^2ds,\\
\bar{\sigma}_v^T:=\sigma_v^\infty-\sigma_v^T=\int_{T-v}^\infty \sigma_v(u)du,\enskip z_v^T:=\frac{1}{2}\int_0^v\|\bar{\sigma}_s^T\|_{\ell^2}^2ds,\enskip Y_v^T:=e^{-(j_v^T-j_v^\infty)-z_v^T}.
\end{gather}
For  $p\geq1$ and a random variable $X$ we denote
$$\|X\|_p:=\big(\mathbb{E}^{\mathbb{Q}}[|X|^p]\big)^{1/p},$$ as long as the expectation is well defined.
Then Eq.\eqref{L1_HJM_target} can be re-written as
\be
\lim_{T\rightarrow\infty} \|e^{-j_t^T-k_t^T}-e^{-j_t^\infty-k_t^\infty}\|_1=0.
\eel{L1_HJM_target2}
By H{\"{o}}lder's inequality,
\be
\varlimsup_{T\rightarrow\infty} \|e^{-j_t^T-k_t^T}-e^{-j_t^\infty-k_t^\infty}\|_1  \leq \varlimsup_{T\rightarrow\infty}\|e^{-j_t^\infty-k_t^\infty}\|_2\| e^{-(j_t^T-j_t^\infty)-(k_t^T-k_t^\infty)}-1\|_2.
\ee
Lemma \ref{L1_HJM_lemma1} and \ref{L1_HJM_lemma3} below show that
$\|e^{-j_t^\infty-k_t^\infty}\|_2$ is finite and $$\lim_{T\rightarrow\infty} \|e^{-(j_t^T-j_t^\infty)-(k_t^T-k_t^\infty)}-1\|_2=0,$$ respectively.


\begin{lemma}
\label{L1_HJM_lemma1}
For each $t>0$, there exists $C$ such that
\be
\sup_{v\leq t}\|Y_v^T\|_2\leq C <\infty \text{ and }\|e^{-j_t^\infty-k_t^\infty}\|_2\leq C<\infty.
\ee
\end{lemma}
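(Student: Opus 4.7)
The approach is the standard ``squaring trick'' for stochastic exponentials, powered by the uniform $\ell^2$-bounds available from Lemma \ref{HJM_additional1} and Assumption \ref{HJM_assumption}(iii). First I would observe that
$$j_v^T-j_v^\infty=\int_0^v(\sigma_s^T-\sigma_s^\infty)\cdot dW_s^{\mathbb Q}=-\int_0^v\bar\sigma_s^T\cdot dW_s^{\mathbb Q},$$
so $Y_v^T$ is precisely the Dol\'eans--Dade exponential $\mathcal E\bigl(\int_0^\cdot\bar\sigma_s^T\cdot dW_s^{\mathbb Q}\bigr)_v$. Squaring and compensating gives the decomposition
$$(Y_v^T)^2=\exp\Bigl(2\int_0^v\bar\sigma_s^T\cdot dW_s^{\mathbb Q}-2\int_0^v\|\bar\sigma_s^T\|_{\ell^2}^2 ds\Bigr)\,\exp\Bigl(\int_0^v\|\bar\sigma_s^T\|_{\ell^2}^2 ds\Bigr),$$
i.e., the stochastic exponential of the \emph{doubled} integrand times a deterministic-bounded remainder.

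Next I would establish a uniform-in-$(s,T)$ bound on $\|\bar\sigma_s^T\|_{\ell^2}$. Since $\bar\sigma_s^T=\int_{T-s}^\infty\sigma_s(u)\,du$ and $T\geq t\geq v\geq s$, Lemma \ref{HJM_additional1} applied componentwise yields $\int_{T-s}^\infty|\sigma_s^j(u)|du\leq C(T-s)\|\sigma_s^j\|_{\bar w}$ with $C(T-s)=K((T-s)^{-\epsilon/2}\wedge 1)\leq K$. Summing over $j$ and invoking the uniform bound $\|\sigma_s\|_{L_2^0(H_{\bar w})}\leq D_2$ from Assumption \ref{HJM_assumption}(iii) gives
$$\|\bar\sigma_s^T\|_{\ell^2}^2\leq K^2\|\sigma_s\|_{L_2^0(H_{\bar w})}^2\leq K^2 D_2^2,$$
uniformly for $s\leq t$ and $T\geq t$. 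In particular $\int_0^v\|\bar\sigma_s^T\|_{\ell^2}^2\,ds\leq K^2 D_2^2 t$, so Novikov's condition is satisfied for $2\int\bar\sigma^T\cdot dW^{\mathbb Q}$ and the leading stochastic exponential in the display above is a true $\mathbb Q$-martingale with unit expectation. Taking $\mathbb E^{\mathbb Q}$ of the decomposition therefore gives $\|Y_v^T\|_2^2\leq e^{K^2D_2^2 t}$, uniformly in $v\leq t$, $T\geq t$.

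For the second bound I would run the identical argument with $\sigma^\infty$ replacing $\bar\sigma^T$:
$$\bigl(e^{-j_t^\infty-k_t^\infty}\bigr)^2=\mathcal E\Bigl(-2\!\int_0^\cdot\sigma_s^\infty\cdot dW_s^{\mathbb Q}\Bigr)_t\,\exp\Bigl(\int_0^t\|\sigma_s^\infty\|_{\ell^2}^2\,ds\Bigr).$$
Now Lemma \ref{est_sigma} directly supplies the uniform bound $\int_0^t\|\sigma_s^\infty\|_{\ell^2}^2 ds\leq C^2(0)tD_2^2$, which both controls the deterministic factor and verifies Novikov for the doubled-coefficient exponential. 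Taking expectations yields $\|e^{-j_t^\infty-k_t^\infty}\|_2^2\leq e^{C^2(0)tD_2^2}$, and absorbing both estimates into a single constant $C$ finishes the lemma.

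There is no real obstacle here; the only point needing a bit of care is checking that the uniform constant $C(T-s)$ does not blow up as $v\to t$ and $T\to t$ simultaneously, which is automatic because $T-s\geq 0$ forces $C(T-s)\leq K$. Everything else is a routine application of the squaring identity, Lemma \ref{HJM_additional1}, Lemma \ref{est_sigma}, and Novikov's criterion.
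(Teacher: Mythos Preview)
Your proof is correct and follows essentially the same route as the paper. Both arguments use the ``squaring trick'': write $(Y_v^T)^2$ as a stochastic exponential of the doubled integrand times a factor $e^{2z_v^T}$, bound the exponent via Lemma~\ref{HJM_additional1} and the uniform volatility bound $D_2$, and take expectations. The only cosmetic difference is that the paper phrases the martingale step as ``positive local martingale $\Rightarrow$ supermartingale $\Rightarrow$ expectation $\leq 1$'' (mentioning Novikov only parenthetically), whereas you go straight to Novikov; both are fine here since the quadratic variation is deterministically bounded.
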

\begin{proof}
We begin by considering the process
$(Y_v^T)^2=e^{-(2j_v^T-2j_v^\infty)-4z_v^T+2z_v^T}$ for $t\in [0,T]$. By It\^{o}'s formula, $e^{-(2j_v^T-2j_v^\infty)-4z_v^T}$ is a local martingale. Since it is also positive, it is a supermartingale (in fact, it is a true martingale due to Lemma \ref{est_sigma} and Novikov's criterion).
Therefore for all $v\leq t$,
\be
\mathbb{E}^{\mathbb{Q}}[e^{-(2j_v^T-2j_v^\infty)-4z_v^T}]\leq1.
\ee
Similar to Lemma \ref{est_sigma}, $|z_v^T|\leq\frac{1}{2}C^2(T-v)vD_2^2$. Thus $\|Y_v^T\|_2^2=\mathbb{E}^{\mathbb{Q}}[e^{-(2j_v^T-2j_v^\infty)-4z_v^T+2z_v^T}]\leq e^{C^2(0)vD_2^2}$. This implies
\be
\sup_{v\leq t} \|Y_v^T\|_2\leq e^{\frac{1}{2}C^2(0)tD_2^2}.
\ee
Similarly, $(e^{-j_t^\infty-k_t^\infty})^2=e^{-2j_t^\infty-4k_t^\infty+2k_t^\infty}$. The process $e^{-2j_t^\infty-4k_t^\infty}$ is a supermartingale, and $k_t^\infty\leq\frac{1}{2}C^2(0)tD_2^2$ (by Lemma \ref{est_sigma}). Thus, $$\|e^{-j_t^\infty-k_t^\infty}\|_2\leq e^{C^2(0)tD_2^2}.$$
\end{proof}
\begin{lemma}
\label{L1_HJM_lemma3}
\be
\lim_{T\rightarrow\infty} \|e^{-(j_t^T-j_t^\infty)-(k_t^T-k_t^\infty)}-1\|_2=0.
\eel{L2_jk}
\end{lemma}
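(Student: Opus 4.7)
The plan is to rewrite the exponent entirely in terms of the tail volatility $\bar{\sigma}^T_s := \sigma^\infty_s - \sigma^T_s = \int_{T-s}^\infty \sigma_s(u)du$, reduce the claim to (i) $L^2$-convergence of the Doléans exponential $Y^T_t$ to $1$ and (ii) uniform convergence to $0$ of a deterministic cross term, and then combine the two with H\"older's inequality.

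\textbf{Step 1 (Rewriting the exponent).} First I would observe, directly from the definitions, that
\begin{equation}
j^T_t - j^\infty_t = -\int_0^t \bar{\sigma}^T_s \cdot dW^{\mathbb{Q}}_s,
\qquad
k^T_t - k^\infty_t = z^T_t - \int_0^t \sigma^\infty_s \cdot \bar{\sigma}^T_s\, ds,
\end{equation}
the second identity using $\|\sigma^T_s\|^2 = \|\sigma^\infty_s\|^2 - 2\sigma^\infty_s\cdot\bar{\sigma}^T_s + \|\bar{\sigma}^T_s\|^2$. Substituting yields the clean factorization
\begin{equation}
e^{-(j^T_t - j^\infty_t) - (k^T_t - k^\infty_t)} \;=\; Y^T_t \cdot \exp\!\Big(\int_0^t \sigma^\infty_s\cdot \bar{\sigma}^T_s\, ds\Big),
\end{equation}
where $Y^T_t = \exp(\int_0^t \bar{\sigma}^T_s\cdot dW^{\mathbb{Q}}_s - z^T_t)$ is the Doléans-Dade exponential of the drift $\bar{\sigma}^T$.

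\textbf{Step 2 (Uniform decay of $\bar{\sigma}^T$).} Applying Lemma \ref{HJM_additional1} componentwise and squaring, exactly as in the proof of Lemma \ref{est_sigma}, gives
\begin{equation}
\|\bar{\sigma}^T_s\|_{\ell^2}^2 \;\leq\; C^2(T-s)\,\|\sigma_s\|_{L_2^0(H_{\bar w})}^2 \;\leq\; C^2(T-s)\,D_2^2.
\end{equation}
For $s\in[0,t]$ and $T>t$, $C(T-s)\leq C(T-t)\to 0$, so $\varepsilon_T := \int_0^t \|\bar{\sigma}^T_s\|_{\ell^2}^2\, ds \leq t\,C^2(T-t) D_2^2 \to 0$ as $T\to\infty$, \emph{deterministically}.

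\textbf{Step 3 (Control of the cross term).} Cauchy-Schwarz together with Lemma \ref{est_sigma} gives
\begin{equation}
\Big|\int_0^t \sigma^\infty_s\cdot \bar{\sigma}^T_s\, ds\Big| \;\leq\; \Big(\int_0^t\|\sigma^\infty_s\|_{\ell^2}^2 ds\Big)^{1/2}\!\Big(\int_0^t\|\bar{\sigma}^T_s\|_{\ell^2}^2 ds\Big)^{1/2} \;\leq\; C(0)D_2\,\sqrt{\varepsilon_T},
\end{equation}
which tends to $0$ uniformly in $\omega$. Hence $\exp(\int_0^t\sigma^\infty_s\cdot\bar{\sigma}^T_s\, ds)\to 1$ uniformly.

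\textbf{Step 4 (The $L^2$ limit of $Y^T_t$).} Since $\bar{\sigma}^T$ is uniformly bounded, Novikov's criterion shows that $Y^T$ is a true $\mathbb{Q}$-martingale, so $\mathbb{E}^{\mathbb{Q}}[Y^T_t]=1$. Writing
\begin{equation}
(Y^T_t)^2 \;=\; \exp\!\Big(2\!\int_0^t\bar{\sigma}^T_s\cdot dW^{\mathbb{Q}}_s - 2\!\int_0^t\|\bar{\sigma}^T_s\|_{\ell^2}^2 ds\Big)\cdot \exp\!\Big(\int_0^t\|\bar{\sigma}^T_s\|_{\ell^2}^2 ds\Big),
\end{equation}
the first factor is a Doléans exponential with mean $1$, and the second is bounded by $e^{\varepsilon_T}$. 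Thus
\begin{equation}
1 \;\leq\; \mathbb{E}^{\mathbb{Q}}[(Y^T_t)^2] \;\leq\; e^{\varepsilon_T} \;\longrightarrow\; 1,
\end{equation}
so $\|Y^T_t - 1\|_2^2 = \mathbb{E}^{\mathbb{Q}}[(Y^T_t)^2] - 2\mathbb{E}^{\mathbb{Q}}[Y^T_t] + 1 \to 0$.

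\textbf{Step 5 (Combining).} Writing $\Delta_T := Y^T_t\cdot \exp(\int_0^t\sigma^\infty_s\cdot\bar{\sigma}^T_s\, ds) - 1$, decompose
\begin{equation}
\Delta_T \;=\; Y^T_t\Big(\exp\!\Big(\textstyle\int_0^t\sigma^\infty_s\cdot\bar{\sigma}^T_s\, ds\Big) - 1\Big) + (Y^T_t - 1).
\end{equation}
The first summand is bounded in $L^2$ by $\|Y^T_t\|_2$ (uniformly bounded by Lemma \ref{L1_HJM_lemma1}) times a deterministic factor tending to $0$ by Step 3, and the second tends to $0$ in $L^2$ by Step 4. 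This yields \eqref{L2_jk}.

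The main obstacle is Step 4: one must control the $L^2$-norm of the stochastic exponential $Y^T_t$ uniformly in $T$ and verify it shrinks to $1$. The factorization of $(Y^T_t)^2$ into a mean-one Doléans exponential times a deterministically bounded factor is the technical crux, and it only works because Step 2 provides a \emph{deterministic} bound $\varepsilon_T\to 0$, which in turn requires the strengthened weight condition \eqref{wprime}.
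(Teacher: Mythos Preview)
Your argument is correct and follows essentially the same route as the paper: the factorization $Y_t^T\cdot\exp(\int_0^t\sigma_s^\infty\cdot\bar\sigma_s^T\,ds)$ is algebraically identical to the paper's $Y_t^T\,e^{z_t^T-(k_t^T-k_t^\infty)}$, and your Steps~2--3 reproduce the paper's Lemma~\ref{est_k} in slightly different packaging.

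The one methodological difference worth flagging is your Step~4. The paper proves $\|Y_t^T-1\|_2\to 0$ (its Lemma~\ref{L1_HJM_lemma2}) via It\^o's isometry, $\|Y_t^T-1\|_2^2=\mathbb{E}^{\mathbb Q}\!\big[\int_0^t |Y_v^T|^2\|\bar\sigma_v^T\|_{\ell^2}^2\,dv\big]$, and then invokes the uniform bound on $\|Y_v^T\|_2$ from Lemma~\ref{L1_HJM_lemma1}. You instead use the identity $\|Y_t^T-1\|_2^2=\mathbb{E}^{\mathbb Q}[(Y_t^T)^2]-1$ together with a direct second-moment bound obtained by factoring $(Y_t^T)^2$ as a mean-one exponential martingale times a deterministically bounded term. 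Your route is slightly more elementary (no It\^o isometry, no appeal to Lemma~\ref{L1_HJM_lemma1} at this step) and gives the explicit estimate $\|Y_t^T-1\|_2^2\le e^{tC^2(T-t)D_2^2}-1$; the paper's route yields the comparable bound $C^2 D_2^2\, t\, C^2(T-t)$ and makes the dependence on the tail volatility somewhat more transparent. One cosmetic point: since $\sigma$ may depend on $(\omega,f)$, your $\varepsilon_T$ is in general random; in the displayed inequalities of Step~4 you should replace $e^{\varepsilon_T}$ by its deterministic upper bound $e^{tC^2(T-t)D_2^2}$, which is what your Step~2 actually delivers.
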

\begin{proof}
We need the following two intermediate lemmas.
\begin{lemma}
\label{est_k}
For $T\geq t$, $\sup_{v\leq t}|k_v^T-k_v^\infty|\leq  C(0)C(T-t)tD_2^2$.
\end{lemma}
\begin{proof}
\be
\begin{array}{ll}
\displaystyle{\sup_{v\leq t}|k_v^T-k_v^\infty|} &=\displaystyle{\sup_{v\leq t}\left|\frac{1}{2}\sum_{j\in\mathbb{N}}\int_0^v \left(\big(\int_0^{T-s}+\int_0^\infty\big) \sigma_s^j(u)du\right)\left(\int_{T-s}^\infty \sigma_s^j(u)du\right)ds\right|}\\
&\leq \displaystyle{\sum_{j\in\mathbb{N}}\int_0^{t} \left(\int_0^\infty |\sigma_s^j(u)|du\right)\left(\int_{T-s}^\infty |\sigma_s^j(u)|du\right) ds}\\
&\leq \displaystyle{\sum_{j\in\mathbb{N}} \int_0^{t} C(0) \|\sigma_s^j\|_{\bar{w}} C(T-s)\|\sigma_s^j\|_{\bar{w}} ds}\\
&\displaystyle{=C(0)C(T-t)\int_{0}^{t}\sum_{j\in\mathbb{N}}  \|\sigma_s^j\|_{\bar{w}}^2 ds}\\
&=\displaystyle{C(0)C(T-t)\int_0^{t} \|\sigma_s\|_{L_2^0(H_{\bar{w}})}^2 ds}\\
&\leq C(0)C(T-t)t D_2^2.\\
\end{array}
\ee
\end{proof}
\begin{lemma}
\label{L1_HJM_lemma2}
\be
\lim_{T\rightarrow\infty} \|Y_t^T-1\|_2=0.
\eel{L2_Y}
\end{lemma}
\begin{proof}
By It\^{o}'s formula,
\be
Y_t^T=1+ \int_0^t Y_v^T\bar{\sigma}_v^T\cdot dW^{\mathbb{Q}}_v.
\ee
By It\^{o}'s isometry, we have
\be
\|Y_t^T-1\|_2^2=\mathbb{E}^{\mathbb{Q}}\Big(\int_0^t \|Y_v^T \bar{\sigma}_v^T\|_{\ell^2}^2 dv\Big).
\ee
By Lemma \ref{HJM_additional1}, $|\bar{\sigma}_v^{T,j}|\leq C(T-v)\|\sigma_v^j\|_{\bar{w}}$. Thus
\be
\begin{array}{ll}
\displaystyle{\|Y_t^T-1\|_2^2} & \displaystyle{\leq\mathbb{E}^{\mathbb{Q}}\Big(\sum_{j\in\mathbb{N}}\int_0^t |Y_v^T|^2 C^2(T-v)\|\sigma_v^j\|^2_{\bar{w}} dv\Big)}\\
& \displaystyle{\leq C^2(T-t)\mathbb{E}^{\mathbb{Q}}\Big(\int_0^t |Y_v^T|^2 \sum_{j\in\mathbb{N}}\|\sigma_v^j\|^2_{\bar{w}} dv\Big)}\\
&\displaystyle{=C^2(T-t)\mathbb{E}^{\mathbb{Q}}\Big(\int_0^t |Y_v^T|^2 \|\sigma_v\|^2_{L^0_2(H_{\bar{w}})} dv\Big)}\\
&\displaystyle{\leq C^2(T-t)\mathbb{E}^{\mathbb{Q}}\Big(\int_0^t |Y_v^T|^2 D_2^2dv\Big)}\\
&\displaystyle{\leq C^2(T-t)D_2^2\int_0^t \mathbb{E}^{\mathbb{Q}}(|Y_v^T|^2) dv}\\
&\displaystyle{\leq C^2(T-t)D_2^2\int_0^t C^2 dv}\enskip\text{(Lemma \ref{L1_HJM_lemma1})} \\
&=C^2(T-t)D_2^2C^2t.\\
\end{array}
\ee
Since $\lim_{T\rightarrow\infty}C(T-t)=0$, Eq.\eqref{L2_Y} is verified.
\end{proof}
Now we return to the proof of Lemma \ref{L1_HJM_lemma3}.
\be
\begin{array}{ll}
\|e^{-(j_t^T-j_t^\infty)-(k_t^T-k_t^\infty)}-1\|_2 & =\|Y_t^T e^{z_t^T-(k_t^T-k_t^\infty)}-1\|_2\\
 & \leq \|(Y_t^T-1)e^{z_t^T-(k_t^T-k_t^\infty)}\|_2+\|e^{z_t^T-(k_t^T-k_t^\infty)}-1\|_2.\\
\end{array}
\ee
Recall that by Lemma \ref{est_k}, $|k_t^T-k_t^\infty|\leq C(0)C(T-t)tD_2^2$. Using the same approach as Lemma \ref{est_sigma}, we can show that $$|z_t^T|\leq\frac{1}{2}C^2(T-t)tD_2^2.$$ Thus, we have
\be
\begin{array}{ll}
\|e^{-(j_t^T-j_t^\infty)-(k_t^T-k_t^\infty)}-1\|_2 & \leq \|Y_t^T-1\|_2 e^{\frac{1}{2}C^2(T-t)tD_2^2+C(0)C(T-t)tD_2^2}\\
&\quad+e^{\frac{1}{2}C^2(T-t)tD_2^2+C(0)C(T-t)tD_2^2}-1.\\
\end{array}
\ee
Finally Eq.\eqref{L2_jk} is verified using Lemma \ref{L1_HJM_lemma2} and the fact that $\lim_{T\rightarrow\infty} C(T-t)=0$.
\end{proof}

\bibliography{mybib7}

\end{document}